\documentclass[a4paper,12pt]{article}
\usepackage{amsmath}
\usepackage{graphicx}
\usepackage{enumerate}
\usepackage{natbib}
\usepackage{amssymb}
\usepackage{multirow}
\usepackage[caption = false]{subfig} 
\usepackage{comment}
\usepackage{float}
\usepackage{amsthm}

\newcommand{\braces}{\set}
\newcommand{\brackets}[1]{\left(#1\right)}
\newcommand{\E}[1]{\mathbb{E}\left[#1\right]}
\newcommand{\iid}{\emph{i.i.d.}}

\renewcommand{\P}{\mathbb{P}}
\newcommand{\etas}{ETAS}
\newcommand{\retas}{RETAS}
\newcommand{\rmd}{\mathrm{d}}
\newcommand{\set}[1]{\left\{#1\right\}}

\newtheorem{theorem}{Theorem}[section]

\begin{document}

\title{Spatiotemporal \etas\ model with a
	renewal main-shock arrival process}
\author{Tom Stindl\thanks{
		This research includes computations using the Linux computational cluster Katana supported by the Faculty of Science, UNSW Sydney, and the National Computational Infrastructure (NCI) supported by the Australian Government.}  \,\,and Feng Chen\thanks{Chen was partly supported by a UNSW SFRGP grant.} \\
  Department of Statistics, UNSW Sydney} \date{}

\maketitle

\begin{abstract}
		 This article proposes a spatiotemporal point process model that enhances the classical Epidemic-Type Aftershock Sequence (ETAS) model by incorporating a renewal main-shock arrival process, which we term the renewal ETAS (RETAS) model. This modification is similar in spirit to the renewal Hawkes (RHawkes) process but the conditional intensity process supports a spatial domain. It empowers the main-shock intensity with the capability to reset upon the arrival of main-shocks and therefore allows for heavier clustering of earthquakes than the spatiotemporal ETAS model introduced by \cite{Ogata1998}. We introduce a likelihood evaluation algorithm for parameter estimation and provide a novel procedure to evaluate the fitted model's goodness-of-fit based on a sequential application of the Rosenblatt transformation. A simulation algorithm for the RETAS model is developed and applied to validate the numerical performance of the likelihood evaluation algorithm and goodness-of-fit test procedure. We illustrate the proposed model and procedures 
on various earthquake catalogs around the world each with distinctly different seismic activity. These catalogs will demonstrate that the RETAS model affords additional flexibility in comparison to the classical spatiotemporal ETAS model and has the potential for superior modeling and forecasting of seismicity.

\noindent\emph{Key words and phrases:} 	goodness-of-fit, point process, RHawkes process, spatial analysis, statistical seismology
\end{abstract}
\section{Introduction}
\label{sec:Introduction}
Point processes are a popular framework 
to describe patterns
of points that cluster in time, space, or a combination
thereof. Spatiotemporal point processes are frequently used to model
the clustering phenomena in seismic activity. 
The Epidemic-Type Aftershock Sequence (\etas) model has received
considerable attention for analyzing and forecasting the arrival times
and epicenters of earthquakes. \cite{Ogata1988} developed the first 
\etas\ model and then generalized it
to a spatiotemporal model \citep{Ogata1998}. The main-shock arrival process
for the \etas\ model is a spatiotemporal Poisson process
that is temporally homogeneous but spatially inhomogeneous, and the
aftershocks are triggered according to a parametric kernel function
that depends on the time, location and magnitude of the triggering
earthquake. The ETAS model has been the cornerstone for several related
works in 
seismology such as \cite{Console2001, Console2003,Zhuang2002,
	Zhuang2004, Zhuang2008, Ogata2004, Zhuang2005, Console2006,
	Marzocchi2009, Helmstetter2006, Werner2011}.

The 
success of self-exciting point processes in the
seismological literature is due primarily to the convenient branching
process interpretation introduced by \cite{Hawkes1974}, which
interprets the points of the process as either \textit{immigrants} or
\textit{offspring} events. 
In the \etas\ model, the immigrant events are interpreted as
main-shocks, which can
induce subsequent offspring events or aftershocks, which may
themselves induce additional aftershocks of their own
. 
The originating main-shock and the sequence 
of aftershocks induced by it form a cluster. The stability conditions in 
the model formulation of the \etas\ model guarantee that the cluster 
eventually ceases.  

Generally, most of the modeling flexibility in the \etas\ model is derived 
from the self-excited 
part
of the model. It is from the self-exciting 
part of the \etas\ model that credible forecasts of aftershocks can be
obtained. However, an inadequate description of the background
component can lead to incorrect classification of main-shocks and
aftershocks 
since the 
classification significantly depends on the specified background rate
(both temporally and spatially). 
This misspecification can also cause the Omori power-law
decay parameter to be either over or under estimated, and therefore the
aftershock sequence would appear shorter or more prolonged than it 
should be. 

Since the main-shock arrival process significantly
influences the clustering behaviour of an earthquake catalog, we
suggest that a more flexible choice for the background rate will
lead to 
a more appropriate distribution 
(and identification)
for main-shocks and their associated aftershock sequences. For
instance, in the New Zealand earthquake catalog to be analyzed
in Section~\ref{sec:fits-NZ}, the classical spatiotemporal \etas\ model
appears to be inadequate, and a modification to the model is required 
to fit the catalog. 
One approach to modifying the spatiotemporal \etas\ model of 
\cite{Ogata1998} is to allow the background seismicity to vary both 
temporally and spatially. 
Allowing the background rate to vary temporally has been discussed
previously in the works of \cite{Chen2013} and \cite{Godoy2016}.
Specifically, in the context of seismicity modelling, the nonstationary
\etas\ model introduced by \cite{Kumazawa2014} accounts for temporal 
variation in the main-shock arrival rate. In
their work, a time-dependent factor was introduced to the background
rate in the form of a first-order spline function to account for
systematic changes in sesmicity at different times. 
However, this
modification departs from the class of stationary point processes
and therefore may not be suitable for modelling long term
seismicity. As such, our proposed extension to the \etas\ model 
can model the short-term transient variations in seismicity while
maintaining long-term stationarity. 

Although nearly all stationary models assume a constant background
event rate in time, generally main-shocks do not occur uniformly in
time
. For instance, the accumulation and release of strain and stress may affect
the short-term main-shock rate. 
Therefore, it is highly plausible that main-shocks do not merely
reflect a spatiotemporal Poisson process that is homogenous in time
and spatially inhomogeneous.  Alternatively, the main-shock arrival
times might be better described by a general renewal process, which
allows the short term main-shock rate to vary according to the lapsed
time since the last main-shock, while still maintaining a constant
main-shock rate in the long term. 

Recently, in the work of \cite{Wheatley2016}, the self-exciting Hawkes
process was generalized to 
model 
the immigrants or background events 
using
a general renewal process but 
focused purely on the
temporal characteristics of the data.  \cite{Chen2018} then applied
the renewal Hawkes (RHawkes) process to an earthquake catalog from 
the Pacific Ring of Fire near the east coast of Japan (originally studied in
\cite{Ogata1988}). They reported that the improvement in fit based on
the renewal main-shock arrival process was adequate to describe the
temporal characteristics of the catalog. \cite{Kolev2018} applied an
analogous modification to the temporal \etas\ model by assuming the
waiting time 
distribution 
between main-shocks follows a gamma or
Brownian passage time (BPT) distribution. Their model was illustrated
on two earthquake catalogs from New Madrid and Northern
California. The non-Poissonian main-shock arrival models
studied thus far have focused mainly on the temporal aspects of the
data and they do not account for the spatial dependencies
that exist among earthquakes.

In this work we extend the classical spatiotemporal \etas\ model of
\cite{Ogata1998} by modelng the main-shoock arrival process as a renewal 
process which accommodates short term deviations of the
background seismicity from the long term mean level, while maintaining
the overall stationarity of the model. 
The process, termed the renewal \etas\ model ({\retas}), 
nests the \etas\ model and therefore allows for a determination of
whether the constant (temporal) background seismicity model 
is appropriate 
for particular seismic sequences. 

With the introduction of the renewal process for the background
seismicity, the main-shock rate depends on the lapsed time since the
previous main-shock which is not observable. This makes likelihood
evaluation challenging. However, using a recursive algorithm motivated
by that of \cite{Chen2018} for the RHawkes process likelihood
evaluation, we can directly evaluate the likelihood of the \retas\
model. 
We can then fit the \retas\ model by minimizing the 
negative log-likelihood and obtain the variance estimate for the
maximum likelihood estimator (MLE) by inverting the 
Hessian matrix. For goodness-of-fit (GOF) assessment of the \retas\ model, 
we propose a novel approach based on the Rosenblatt residuals 
\citep{Rosenblatt1952}, which can also be applied to the classical 
spatiotemporal \etas\ model. The approach avoids the simulations 
required by the thinning spatial residuals based approach of 
\cite{Schoenberg2003} and can assess the GOF of the temporal and 
spatial components of the model either simultaneously or separately. 

The remainder of this article is structured as follows. The general
form of the \retas\ model will be described in
Section~\ref{sec:Model}. Following this, the likelihood of the model
is discussed in detail in Section~\ref{sec:mle}. The method to assess
the GOF of the temporal and spatial aspects of the fitted model are
presented in
Section~\ref{sec:Model-Evaluation}. Section~\ref{sec:Simulations}
discusses the simulation of the {\retas} model and reports the results
of a simulation study to investigate the performance of the MLE and GOF test procedure. In
Section~\ref{sec:applications}, we illustrate the \retas\ model by
fitting various forms of the model to different earthquake catalogs 
around the world.

\section{General form of {\retas} model}
\label{sec:Model}
For an earthquake
catalog, we let $\set{(\tau_i,x_i,y_i,m_i)}_{i\ge1}$ denote the
occurrence time $\tau_i$, location $(x_i,y_i)$ and magnitude $m_i$
of each earthquake. Let $N$ be the point process for the earthquakes
with $N(A)$ counting the number of earthquakes in the set
$A \subset [0,T] \times \mathcal{S}\times \mathcal{M}$ where
$T \in \mathbb{R}$, $\mathcal{S} \subset \mathbb{R}^2$, and
$\mathcal{M}\subset \mathbb{R}$. Denote the internal history of $N$ by
$\mathcal{H}=\set{\mathcal{H}_t}_{t\in[0,T]}$, where
$\mathcal{H}_t = \sigma\set{(\tau_i,x_i,y_i,m_i) \, ; \, \tau_i \le
	t}$ is the sigma-field representing the knowledge of all the times,
locations and magnitudes of earthquakes up to and including time $t$.
The intensity process of $N$ relative to $\mathcal{H}$, or the
conditional event rate at time $t \in [0,T]$, location
$(x,y) \in \mathcal{S}$, and magnitude $m\in\mathcal{M}$ given the
internal history of the process prior to time $t$, $\mathcal{H}_{t-}$, is
defined as
\begin{multline}
	\label{eq:intensity-def}
	\lambda(t,x,y,m| \mathcal{H}_{t-}):= \\
	\lim\limits_{\Delta t, \Delta x,
		\Delta y, \Delta m \rightarrow 0}
	\frac{\E{N \brackets{[t,t+\Delta t) \times [x,x+\Delta x) \times
				[y,y+\Delta y) \times[m,m+\Delta m)}  | 
			\mathcal{H}_{t-}}}{\Delta t \Delta x \Delta y \Delta m}.
\end{multline} 

It is often assumed that 
the magnitudes of the earthquakes do not depend on their occurrence
times and locations, or the times, locations, and magnitudes of
previous earthquakes \citep{Zhuang2002}, and are independent and
identically distributed (\iid) with common density function $J$.
Therefore, if we let $\lambda_g$ denote the ground intensity process
for the times and locations of the earthquakes, that is
\begin{align*}
	\lambda_g(t,x,y)=  \lim\limits_{\Delta t, \Delta x,\Delta y \rightarrow 0}
	\frac{\E{N \brackets{[t,t+\Delta t) \times [x,x+\Delta x) \times
				[y,y+\Delta y) \times \mathcal{M}}  | 
			\mathcal{H}_{t-}}}{\Delta t \Delta x \Delta y },
\end{align*}
then the conditional intensity process of the marked
spatiotemporal point process 
takes the form
\begin{equation}
	\label{eq:intensity-txym}
	\lambda(t,x,y,m | \mathcal{H}_{t-}) = \
	\lambda_g(t,x,y | \mathcal{H}_{t-}) J(m).
\end{equation}

The \retas\ process intensity can be specified by introducing the
(unobservable) main-shock indicator $B_i$ such that $B_i = 0$ if the
$i$-th earthquake is a main-shock and $B_i = 1$ if it is an
aftershock. Furthermore, the function
$I(t) := \max \set{i \mid \tau_i < t, B_i = 0}$ indicates the index of
the most recent main-shock prior to time $t$ with the convention that
$I(t)=0$ for $t\leq \tau_1$ and $\tau_0=0$. For simplicity, 
we assume that there are
no events before time zero and the first event is a main-shock.
We introduce the extended history
$\tilde{\mathcal{H}_t} = \sigma(\mathcal{H}_t \cup \set{I(s); s \leq t})$
to include the index of the most recent main-shock (which is
unobservable from the earthquake catalog). The ground intensity of
the {\retas} model with respect to the extended history
$\tilde{\mathcal{H}}$ is then specified as
\begin{align}
	\label{eq:intensity-{\retas}}
	\tilde{\lambda}_g(t,x,y | \tilde{\mathcal{H}}_{t-})
	&
	= \mu(t-
	\tau_{I(t)})
	\nu(x,y) +
	\sum_{i:\tau_i <
		t}     
	\kappa(m_i) g(t - \tau_i) f(x - x_i, y - y_i)  \notag \\
	& = \mu(t- \tau_{I(t)}) \nu(x,y) + \phi(t,x,y).
\end{align}
The function $\mu: \mathbb{R}_{\geq 0} \rightarrow \mathbb{R}_{\geq 0}$ 
describes
the temporal variation of the main-shock arrival rate that renews 
upon the arrival of a main-shock. The spatial
variation of main-shocks over the region $\mathcal{S}$ is described by
the function $\nu: \mathcal{S} \rightarrow \mathbb{R}_{\geq 0}$. For
stability of the main-shock arrival process, it is assumed that
$
\int_{0}^{\infty} \exp (- \int_{0}^{t} \mu(s) \rmd s)\rmd t
< \infty,
$
which implies that the expected waiting time between consecutive
main-shocks is finite. For identifiability, it is also assumed that
$\iint_{\mathcal{S}} \nu(x,y) \rmd y\rmd x = 1$. The temporal response
function $g: \mathbb{R}_{\geq 0} \rightarrow \mathbb{R}_{\geq 0}$ 
represents the
temporal distribution of an aftershock relative to its triggering
earthquake, and the spatial response function
$f:\mathcal{S} \rightarrow \mathbb{R}_{\geq 0}$ describes the spatial
distribution of an aftershock relative to its triggering
earthquake. The functions $g$ and $f$ are both assumed to integrate to
unity. The boost function
$\kappa: \mathcal{M} \rightarrow \mathbb{R}_{\geq 0}$ represents the
influence of an earthquake on the instantaneous shock rate with
$\kappa(m)$ denoting the mean number of triggered aftershocks by an
earthquake with magnitude of size $m$.

The waiting times between main-shocks in the \retas\ model are
\iid\ with a common hazard function $\mu$, and therefore the main-shock
arrivals form a renewal process. When the function $\mu$ is a
constant, the \retas\ model~\eqref{eq:intensity-{\retas}} reduces to
the \etas\ model of \cite{Ogata1998}. The ground intensity of the
\retas\ process relative to the internal history $\mathcal{H}$ can be
expressed in the form
\begin{align}
	\label{eq:int-natural}
	{\lambda}_g(t,x,y | \mathcal{H}_{t-})
	&= \E{\mu(t- \tau_{I(t)})|\mathcal{H}_{t-}} \nu(x,y) + \sum_{i:\tau_i < t}    
	\kappa(m_i) g(t - \tau_i) f(x - x_i, y - y_i)  \notag \\
	& = \set{\sum_{j=1}^{N(t-)}\mu(t- \tau_{j})\P(I(t)=j|\mathcal H_{t-})} \nu(x,y) + \phi(t,x,y).
\end{align}
The ground intensity relative to the internal history is not 
readily available in general, unless $\mu$ is a constant,
and this has implications for likelihood evaluation.

For the remainder of this article, the following specific form
for the {\retas} model will be used.  The main-shock arrival process
is either Weibull with hazard function
$\mu_W(t) = \frac\alpha\beta (\frac t \beta) ^{\alpha - 1}, \ t\geq 0$,
or gamma with hazard function $\mu_G(t) =
\frac1{\beta\Gamma(t/\beta,\alpha)}(\frac t \beta)^{\alpha-1} 
e^{-t/\beta}$, $t \geq 0, $
where $\alpha>0$ and $\beta>0$ are the shape and scale parameters
respectively, and $\Gamma(x,k)=\int_x^\infty s^{k-1} e^{-s}\rmd s$
is the upper incomplete gamma function. The bivariate density
function $\nu(x,y)$ is learned from historical data
non-parametrically using a kernel density estimator
. The temporal response function $g$ takes the form of the modified
Omori's law \citep{Omori1894,Utsu1961},
\begin{math}
	g(t) = \frac{p-1}c(1+\frac t c)^{-p}, \ t \ge  0,
\end{math}
where $p > 1$ is a shape parameter indicating the decay rate of
aftershocks and $c>0$ is a scale parameter. The spatial response
function $f$ takes the form of a bivariate normal
density, 
\begin{math}
	f(x,y) = \frac{1}{2 \pi \sigma_1\sigma_2}\exp (-\frac{x^2}{2\sigma_1^2} - \frac{y^2}{2\sigma_2^2}),
\end{math}
where $\sigma_1 >0$ and $\sigma_2>0$ control the dispersion of the aftershock 
distribution in the $x$ and $y$ directions respectively. 
The boost
function 
takes the form,
\begin{math}
	\kappa(m) = Ae^{\delta(m - m_0)},
\end{math}
where $A > 0$ and $\delta>0$ control the expected number of induced aftershocks, and $m_0$ is the minimum or threshold magnitude of earthquakes in
the catalog. The 
distribution of the magnitudes are given by the 
the Gutenberg-Richer law \citep{Gutenberg1944}, which is a shifted
exponential distribution with density, 
\begin{math}
	J(m) = \gamma e^{-\gamma(m-m_0)}, \, m\geq m_0.
\end{math}


\section{Maximum Likelihood Estimation}
\label{sec:mle}
The difficulties that emerge from the complex expression for the
intensity process with respect to the internal history implies that
estimation of model parameters for the {\retas} model cannot be
readily performed using the classical likelihood formula for point
process models directly. However, in this section, we develop an
algorithm that computes the likelihood recursively,
therefore enabling likelihood-based inferences such as maximum
likelihood estimation
(MLE). 
For marked spatiotemporal point processes with conditional intensity
process defined as in~\eqref{eq:intensity-txym}, that is,
with respect to the internal history, the log-likelihood
can be represented in terms of the ground intensity process and an
additional term relating to the marks
\begin{equation}
	\label{eq:pp-loglik} 
	\ell(\theta) = \sum_{i=1}^{n} \log \brackets{\lambda_g(\tau_i,x_i,y_i)} -
	\int_{0}^{T} \iint_{\mathcal{S}} \lambda_g(t,x,y) \rmd y \rmd x \rmd t  
	+ \sum_{i=1}^n \log \brackets{J(m_i)} ,
\end{equation}
where $\theta = (\alpha, \beta, p, c, d, A, \delta, \gamma)^\top$
denotes the vector of parameters in the {\retas} model,
and $n=N([0,T]\times \mathcal{S}\times \mathcal{M})$
denotes the number of earthquakes in the study region in the interval $[0,T]$.

Since the distribution of the most recent main-shock index $I(t)$
given the prior-$t$ history $\mathcal{H}_{t-}$ is not readily
available, the loglikelihood expression in~\eqref{eq:pp-loglik} is not
amenable for direct evaluation. Therefore, we express the
log-likelihood in an alternative form and propose an algorithm for its
evaluation given in the subsequent theorem.
For convenience, we use the following notations; $s_i = (x_i,y_i)$, 
$p_{ij}=\P(I(\tau_i)  = j|\mathcal{H}_{\tau_i -})$, 
$d_1  = \mu(\tau_1) \nu(s_1) e^{{-\int_{0}^{\tau_1}\mu(t) \rmd t }}$, $d_{ij}  = \brackets{\mu(\tau_i - \tau_j)\nu(s_i) + \phi(\tau_i,s_i) } S_{ij}$, 
with $S_{ij}=e^{-\int_{\tau_{i-1}}^{\tau_i} \mu(t-\tau_j)  \rmd t},$ for $j=1,\dotsc,i-1$, $i=2,\dotsc,n$,  $S_{n+1,j}  = e^{-\int_{\tau_n}^{T}     \mu(t - \tau_j) \rmd t}$, for $j=1,\dotsc,n$, 
and, with $\phi(t,x,y)$ as in~\eqref{eq:intensity-{\retas}},
\begin{align*}
	\Phi(T) & =
	\int_0^T \iint_\mathcal{S} \phi(t,x,y) \rmd x\rmd y \rmd t \\
	& =
	\sum_{j=1}^n    \kappa(m_j) \int_{\tau_j}^T g(t - \tau_j)\,\rmd t\iint_{\mathcal{S}}f(x - x_j,y-y_j)\,\rmd x\rmd y.
\end{align*}

\begin{theorem}\label{th:loglik}
	The log-likelihood of the \retas\ model with 
	conditional intensity with respect to the extended 
	history $\tilde{\mathcal{H}}$ in~\eqref{eq:intensity-{\retas}} 
	is given by
	\begin{equation}
		\label{eq:log-lik}
		\ell(\theta ) = \log d_1 + \sum_{i=2}^{n} 
		\log \brackets{ \sum_{j=1}^{i-1}p_{ij}\,d_{ij} } + 
		\log \brackets{\sum_{j=1}^{n} p_{n+1,j} S_{n+1,j}  } + \Phi(T) + 
		\sum_{j=1}^{n} \log J(m_j),
	\end{equation}
	and the $p_{ij}$'s are recursively calculated using
	\begin{equation}
		\label{eq:pij-rec}
		p_{ij}  = \begin{cases}
			\frac{
				p_{i-1,j} \phi(\tau_i,s_i) S_{i - 1,j}}{
				\sum_{j=1}^{i-2} p_{i-1,j}d_{i-1,j}}, & j = 1,\dotsc,i - 2 \\
			1-\sum_{k = 1}^{i - 2}p_{ik}, & j = i - 1
		\end{cases},\quad i\geq 3,
	\end{equation}
	from the initial value $p_{21}=1$. 
\end{theorem}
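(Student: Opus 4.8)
The plan is to recover the likelihood from the closed-form intensity with respect to the extended history $\tilde{\mathcal H}$ by a two-step argument: first write down the ``complete-data'' likelihood in which the unobservable main-shock labels $B_1,\dots,B_n$ (equivalently the step process $I(\cdot)$) are treated as observed, and then marginalise these labels out with a forward recursion that exploits the renewal structure of the main-shock arrivals. The magnitude factor splits off immediately from~\eqref{eq:intensity-txym} as $\sum_j\log J(m_j)$ and plays no further role, so all of the work is with the ground process.

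First I would record the complete-data likelihood. Conditionally on $\tilde{\mathcal H}$, the ground intensity is the superposition $\mu(t-\tau_{I(t)})\nu(x,y)+\phi(t,x,y)$ of a main-shock part and an aftershock part. By the branching-structure representation of the process, the joint density of the event times, locations and labels factorises over the inter-event intervals: an event at $(\tau_i,s_i)$ contributes $\phi(\tau_i,s_i)$ if it is an aftershock and $\mu(\tau_i-\tau_{I(\tau_i)})\nu(s_i)$ if it is a main-shock; each interval $(\tau_{i-1},\tau_i)$ contributes the survival factor $\exp\!\brackets{-\int_{\tau_{i-1}}^{\tau_i}\iint_{\mathcal S}\!\brackets{\mu(t-\tau_{I(t)})\nu+\phi}\,\rmd x\,\rmd y\,\rmd t}$, and the terminal interval $(\tau_n,T]$ contributes the analogous survival factor. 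Because $I(\cdot)$ is constant on each inter-event interval and $\iint_{\mathcal S}\nu=\iint_{\mathcal S}f=1$, the main-shock part of the survival factor on $(\tau_{i-1},\tau_i)$ collapses to $S_{i,I(\tau_i)}$, the first-event contribution is exactly $d_1$, and --- since $I(\cdot)$ does not enter $\phi$ --- the aftershock parts of all the survival factors multiply, label-independently, to $e^{-\Phi(T)}$ by the definition of $\Phi(T)$.

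Next I would marginalise over the $2^{n-1}$ label configurations. The key structural fact, which is where the renewal assumption enters, is that the only feature of the past labels that affects the intensity after time $t$ is the index $I(t)$ of the most recent main-shock; so the sum over configurations can be organised as a forward recursion on the value of $I$. Let $F_i(j)$ denote the partial ground likelihood of the first $i$ events restricted to label configurations in which the most recent main-shock among events $1,\dots,i$ is event $j$. From the factorisation above, for $j<i$ one has $F_i(j)=F_{i-1}(j)$ times the aftershock contribution of event $i$ (which leaves $I$ unchanged), while $F_i(i)=\sum_{j<i}F_{i-1}(j)\,\mu(\tau_i-\tau_j)\nu(s_i)\,S_{ij}$ (event $i$ is a main-shock whose predecessor index was $j$). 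Normalising, $p_{ij}=F_{i-1}(j)/\sum_{j'}F_{i-1}(j')$ is precisely $\P(I(\tau_i)=j\mid\mathcal H_{\tau_i-})$; the identity $d_{ij}/S_{ij}=\mu(\tau_i-\tau_j)\nu(s_i)+\phi(\tau_i,s_i)$ gives $\sum_{j'}F_i(j')/\sum_{j'}F_{i-1}(j')=\sum_{j=1}^{i-1}p_{ij}d_{ij}$; and normalising the $F$-recursion yields~\eqref{eq:pij-rec}, with the case $j=i-1$ appearing as the complementary probability $1-\sum_{k\le i-2}p_{ik}$ and the base case $p_{21}=1$ coming from the assumption that the first event is a main-shock. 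Telescoping, the ground likelihood equals $d_1\prod_{i=2}^n\brackets{\sum_j p_{ij}d_{ij}}$ times the terminal factor $\sum_j p_{n+1,j}S_{n+1,j}$ (no events in $(\tau_n,T]$) times $e^{-\Phi(T)}$; taking logarithms and restoring $\sum_j\log J(m_j)$ gives~\eqref{eq:log-lik}.

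The hard part will be the marginalisation step, and in particular resisting the temptation to substitute the internal-history intensity~\eqref{eq:int-natural} directly into the classical point-process log-likelihood~\eqref{eq:pp-loglik}: that would replace the compensator by $\int_0^T\E{\mu(t-\tau_{I(t)})\mid\mathcal H_{t-}}\,\rmd t$, whereas the correct expression must keep the index-dependent survival factors $S_{ij}$ \emph{inside} the sum over the possible most-recent-main-shock indices, which is exactly why the argument has to pass through the complete-data likelihood. Once that is set up, carefully tracking the three kinds of contributions --- the first event, the generic inter-event intervals, and the terminal interval $(\tau_n,T]$ --- and verifying that the normalised $F$-recursion reproduces~\eqref{eq:pij-rec} is essentially routine bookkeeping.
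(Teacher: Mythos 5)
Your proposal is correct, and although it arrives at the same final factorisation, it takes a genuinely different route from the paper. The paper writes the likelihood directly as the product of sequential conditional densities $p(\tau_1,s_1,m_1)\prod_{i=2}^{n}p(\tau_i,s_i,m_i\mid\mathcal H_{\tau_i-})\,\P(\tau_{n+1}>T\mid\mathcal H_T)$, splits off the magnitudes, and then applies the law of total probability over the latent index $I(\tau_i)$, computing $p(\tau_i,s_i\mid I(\tau_i)=j,\mathcal H_{\tau_i-})=d_{ij}\exp\brackets{-\int_{\tau_{i-1}}^{\tau_i}\iint_{\mathcal S}\phi}$; crucially, it does not prove the recursion~\eqref{eq:pij-rec} at all but defers it to the supplementary material of \cite{Chen2018}. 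You instead pass through the complete-data (labelled) likelihood and marginalise the labels by a forward recursion on the most recent main-shock index, with the $p_{ij}$ emerging as normalised forward variables $F_{i-1}(j)/\sum_{j'}F_{i-1}(j')$. What your route buys is a self-contained proof of~\eqref{eq:pij-rec} (exactly the step the paper outsources) and a clear explanation of why one cannot simply substitute the internal-history intensity~\eqref{eq:int-natural} into~\eqref{eq:pp-loglik}; what the paper's route buys is brevity, since once the recursion is granted the likelihood identity follows in a few lines. Three reconciliation points you should state explicitly when writing this up. First, your normalised forward recursion gives numerator $p_{i-1,j}\,\phi(\tau_{i-1},s_{i-1})\,S_{i-1,j}$ (the Bayes update is driven by observing event $i-1$ as an aftershock), so it matches~\eqref{eq:pij-rec} only after reading the $\phi(\tau_i,s_i)$ displayed there as an index slip; do not silently claim they coincide. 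Second, your $e^{-\Phi(T)}$ factor produces $-\Phi(T)$ in the log-likelihood, which is consistent with the survival factors displayed in the paper's own derivation, whereas~\eqref{eq:log-lik} shows $+\Phi(T)$; again, note the discrepancy rather than assert agreement. Third, you do not need (and the model does not assume) $\iint_{\mathcal S}f=1$: only $\iint_{\mathcal S}\nu\,\rmd y\,\rmd x=1$ is used to collapse the main-shock survival factors to the $S_{ij}$, while the $f$-integrals remain inside $\Phi(T)$ by its definition; and in the aftershock branch of the forward recursion the interval survival factor must be carried along, i.e.\ $F_i(j)=F_{i-1}(j)\,\phi(\tau_i,s_i)\,S_{ij}$, which your phrase ``the aftershock contribution of event $i$'' leaves implicit but your identity $\sum_{j'}F_i(j')/\sum_{j'}F_{i-1}(j')=\sum_{j}p_{ij}d_{ij}$ requires.
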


\begin{proof}
	For notational convenience, for a sequence $z_1,z_2,\dotsc$, we use
	$z_{i:j}$ to denote $z_i,z_{i+1},\dotsc,z_j$, $i<j$. The likelihood
	function for the {\retas} model is calculated by computing the
	sequential contribution to the likelihood between successive
	earthquakes as follows
	\begin{equation}
		L(\tau_{1:n}, s_{1:n}, m_{1:n}) = 
		p(\tau_1,s_1,m_1) 
		\braces{ \prod_{i=2}^{n} p(\tau_i,s_i,m_i | \mathcal{H}_{\tau_i-}) } 
		\P \brackets{\tau_{n+1} > T | \mathcal{H}_T}.
	\end{equation}
	Then by invoking the independent magnitude assumption, the
	likelihood can be separated into a spatiotemporal component
	for the ground intensity process and the marks as follows
	\begin{multline}
		L(\tau_{1:n}, s_{1:n}, m_{1:n}) = 
		p(\tau_1,s_1) p(m_1 | \tau_1,s_1)\times {}  \\
		\braces{ \prod_{i=2}^{n} p(\tau_i,s_i | \mathcal{H}_{\tau_i-}) 
			p(m_i | \tau_i,s_i, \mathcal{H}_{\tau_i-})} 
		\P \brackets{\tau_{n+1} > T | \mathcal{H}_T}.
	\end{multline}  
	Therefore, the log-likelihood $\ell(\theta)$ is
	separable into a spatiotemporal component $\ell_s(\theta)$ and a
	likelihood contribution for the magnitudes $\ell_m(\theta)$, in
	which
	\begin{equation}
		\label{eq:ls}
		\ell_s(\theta) = \log p(\tau_1, s_1) + \sum_{i=2}^{n} 
		\log  p(\tau_i,s_i | \mathcal{H}_{\tau_i-}) \\ + 
		\log    \P \brackets{\tau_{n+1} > T | \mathcal{H}_{T}}
	\end{equation}
	\normalsize
	and
	\begin{equation}
		\ell_m(\theta) = \log p(m_1 | \tau_1, s_1) + \sum_{i=2}^{n} \log p(m_i
		| \tau_i,s_i,\mathcal{H}_{\tau_i-})=\sum_{i=1}^n \log J(m_i). 
	\end{equation}
	The form of $\ell_m$ is complete and the remainder of this proof deals with the evaluation of $\ell_s$. 
	By conditioning on the index of the most recent main-shock,
	the 
	expression in~\eqref{eq:ls} has the equivalent form
	\begin{multline*}
		\ell_s(\theta) = \log p(\tau_1, s_1)  
		+ \sum_{i=2}^{n} 
		\log \brackets{ \sum_{j=1}^{i-1} p(\tau_i,s_i | I(\tau_i) = j, \mathcal{H}_{\tau_i-}) 
			p_{i,j} }\\
		+ 
		\log \sum_{j=1}^{n}      \P \brackets{I(\tau_{n+1}) = j | \mathcal{H}_{T}}
		p_{n+1,j} ,
	\end{multline*}
	where
	\begin{align*}
		& p(\tau_i,s_i | I(\tau_i) = j, \mathcal{H}_{\tau_i-}) \\
		& \qquad = (\mu(\tau_i - \tau_j) \nu(s_i) + \phi(\tau_i,s_i))
		\exp \brackets{-\int_{\tau_{i-1}}^{\tau_i} \iint_{\mathcal{S}} \mu(t - \tau_j) \nu(s)
			+ \phi(t,s) \, \rmd s \rmd t   }\\
		&\qquad = d_{ij}\exp\brackets{-\int_{\tau_{i-1}}^{\tau_i} \iint_{\mathcal{S}} \phi(t,s) \, \rmd s \rmd t},
	\end{align*}
	and
	\begin{align*}
		\P \brackets{I(\tau_{n+1}) = j | \mathcal{H}_{T}}  &=
		\exp \brackets{-\int_{\tau_n}^T \iint_{\mathcal{S}} \mu(t -
			\tau_j)\nu(s) + \phi(t,s) \, \rmd s \rmd t   }\\ 
		&
		= S_{n+1,j}\exp\brackets{-\int_{\tau_{n}}^{T} \iint_{\mathcal{S}} \phi(t,s) \, \rmd s \rmd t}.
	\end{align*}
	Therefore, 
	it follows that
	\begin{equation}
		\ell_s(\theta ) = \log d_1 + \sum_{i=2}^{n} 
		\log \brackets{ \sum_{j=1}^{i-1}p_{ij}\,d_{ij} } + 
		\log \brackets{\sum_{j=1}^{n} p_{n+1,j} S_{n+1,j}  } + \Phi(T).
	\end{equation}
	To complete the proof, it remains to show the
	recursion~\eqref{eq:pij-rec} for the distribution of the most recent
	main-shock index. The proof is analogous to Eq.~(S.6)-(S.9)
	in the file proofs.pdf contained in the online supplementary
	material accompanying \cite{Chen2018}
	. \qed
\end{proof}

The MLE of the model parameter is obtained by directly maximizing
the log-likelihood function in~\eqref{eq:log-lik} using general purpose
optimization routines, and inverting the Hessian matrix to estimate the
variance of the MLE. For this purpose, fast and accurate
computation of the double integral
$\iint_\mathcal{S}f(x-x_j,y-y_j)\rmd x\rmd y$ required
in~\eqref{eq:log-lik} is important. When the spatial domain
$\mathcal{S}$ is large relative to the variance of the spatial response
function $f$, the integral can be approximated by 1. 
However, in general this approximation can be too crude and lead to
substantial bias in the estimators. Numerical quadrature can be used
to approximate the integral accurately, but 2-dimensional (2-d)
quadrature can be intolerably slow. For some choices of the function $f(x,y)$,
such as those that depend on the input $(x,y)$ only through its norm
$\sqrt{x^2+y^2}$, the 2-d integral can be reduced to 1-d integrals
through a change of variables using polar coordinates and then
evaluated using 1-d quadrature.

\section{Goodness-of-fit Assessment}
\label{sec:Model-Evaluation}
An assessment of the adequacy of the fit of the \retas\ model to an
earthquake catalog should be conducted once the fitted model has been 
obtained. 
There are
two obvious aspects of the fitted model that should be assessed, that
is, the temporal and the spatial aspects. These two aspects are
typically assessed separately using different approaches, with the
assessment of the temporal aspect based on the Papangelou time-change
theorem \citep[Theorem~7.4.I][]{Daley2003}, and the assessment of the
spatial aspect based on a thinning technique
\citep{Schoenberg2003}. However, the intensity process of the \retas\
model relative to the internal history is cumbersome to work with,
which makes it inconvenient to use these approaches. Instead, we propose a
unified approach to evaluate the goodness-of-fit (GOF) of both aspects
of the \retas\ model based on the \cite{Rosenblatt1952}
transformation.

For a $n$-dimensional random vector $Z = (Z_1,\dotsc,Z_n)$, the
Rosenblatt transformation $T(Z)=(T_1(Z),\dotsc,T_n(Z))$ is defined by
$T_1(z)=\P(Z_1\leq z_1)=F_1(z_1)$, and
$T_i(z)=\P(Z_i\leq
z_i|Z_1=z_1,\dotsc,Z_{i-1}=z_{i-1})=F_i(z_i|z_1,\dotsc,z_{i-1})$ for
$i=2,\dotsc,n$, where the $F_i$'s are the conditional distribution
functions. It is known that the distribution of $T(Z)$ is uniform on
the hypercube $[0,1]^n$. When the conditional distribution functions
are replaced with their estimators $\hat F_i$'s, then the Rosenblatt
residuals $\hat F_1(Z_1),\dotsc,\hat F_n(Z_n|Z_1,\dotsc,Z_{n-1})$
are approximately \iid\ uniform on $[0,1]$ when the model is
correctly specified. Uniformity can be checked using formal
statistical tests such as the Kolomogorov-Smirnoff (K-S) test and
independence by the Ljung-Box (L-B) test or visually assessed using
graphical tools such as the uniform Q-Q plot and the ACF plot.

\subsection{Temporal model assessment}

Specific to the \retas\ model, the Rosenblatt residuals to assess
the temporal component of the model are given by
$U_i = \hat F_i(\tau_i| \mathcal H_{\tau_{i-1}})$ where $\hat F_i$ is
the fitted conditional distribution function of $\tau_i$ given
$\mathcal{H}_{\tau_{i-1}}=\sigma\braces{\tau_{1:i-1},
	x_{1:i-1},y_{1:i-1},m_{1:i-1}}$. Let $\hat\mu(\cdot)$ be the plugin
estimate of $\mu(\cdot)$ with the unknown parameters replaced with
their estimted values, and
$\hat \kappa(\cdot), \hat g(\cdot), \hat f(\cdot,\cdot)$, and
$\hat p_{ij}$ be analogously defined. Then $U_1=\hat F_1(\tau_1)=1- \exp\brackets{-\int_{0}^{\tau_1} \hat\mu(s) \rmd
	s}$
and for $i=2,\dotsc,n$,
\begin{align*}
	U_i&=\hat F_i(\tau_i|\mathcal H_{\tau_{i-1}})= 1 - \sum_{j=1}^{i-1}
	\hat p_{ij} \hat{S}_{ij},
\end{align*}
where
\begin{displaymath}
	\hat{S}_{ij} =
	\exp \bigg(-\int_{\tau_{i-1}}^{\tau_i} \bigg[\hat \mu(s - \tau_j) 
	+ \sum_{k=1}^{N(t-)} \hat \kappa(m_i)\hat g(s-\tau_k)\iint_{\mathcal{S}} \hat f(x-x_k,y-y_k) \rmd y \rmd x \bigg] \rmd s\bigg).
\end{displaymath}


\subsection{Spatial model assessment}
For spatial model assessment, two residuals $V_i$
and $W_i$ are calculated for the longitude $x_i$ and latitude $y_i$ of the
earthquakes respectively. First, we calculate the longitudinal residual
$V_i$ conditional on the time $\tau_i$ of the quake and the history
$\mathcal H_{\tau_i-}$ of the process prior to $\tau_i$, and then
we calculate the latitudinal residual $W_i$ with the longitude $x_i$ of
the quake also included in the conditioning
information. 
That is, we define the longitudinal
residual 
$V_i = \hat G_i(x_i| \tau_i,\mathcal H_{\tau_i-})$,
where $\hat G_i$ denotes the fitted conditional distribution of $x_i$
given $\tau_i$ and $\mathcal{H}_{\tau_i-}$, and the latitudinal
residual 
$W_i = \hat H_i(x_i| \tau_i,x_i,\mathcal{H}_{\tau_i-})$, where
$\hat H_i$ denotes the fitted conditional distribution of $y_i$ given
$\tau_i, x_i,$ and $\mathcal{H}_{\tau_i-}$. 
In the following, when presenting the specific expressions of the residuals,
we suppress the hat notation ($\hat{\phantom{\nu}}$) from various estimated
parameters for convenience.

The longitudinal residual is given, for $i=1$, by
$
V_1 = \iint_{ \mathcal{S} \cap \set{(x,y); \, x \le x_i}} \nu(x,y)\rmd
y \rmd x,\label{eq:V1}
$
and for $i =2, \dotsc, n$, by
\begin{equation}\label{eq:Vi}
	V_i = \sum_{j=1}^{i-1} \biggl\{ \frac{\mu(\tau_i - \tau_j)  \iint_{\mathcal{S} \cap \set{(x,y); \, x \le x_i}} \nu(x,y) \rmd y \rmd x
		+  \iint_{\mathcal{S} \cap \set{(x,y); \, x \le x_i}} \phi(\tau_i,x,y) \rmd y \rmd x  }
	{ \mu(\tau_i - \tau_j) + \iint_{\mathcal{S}} \phi(\tau_i,x,y) \rmd y\rmd x} p^\tau_{ij}
	\biggr\},
\end{equation}
where
\begin{equation*}
	\iint_{\mathcal{S}} \phi(\tau_i,x,y) \rmd y \rmd x = 
	\sum_{k=1}^{i-1} \kappa(m_k) g(t-\tau_k) \iint_{\mathcal{S}} f(x - x_k,y - y_k) \rmd y \rmd x,
\end{equation*}
and
$p_{ij}^{\tau} = \P\brackets{I(\tau_i) = j | \tau_i,\mathcal{H}_{\tau_i-}}$ 
are the most recent main-shock probabilities that are updated by 
including $\tau_i$ in the condition,
which differ slightly from the $p_{ij}$ in~\eqref{eq:pij-rec}. 
By updating the most recent main-shock probabilities
$p_{ij}$ computed in the likelihood evaluation, we see that
\begin{equation}
	\label{eq:pij.tau-rec}
	p_{ij}^\tau = \frac{p(\tau_i |I(\tau_i) = j,\mathcal{H}_{\tau_i-}) }
	{p(\tau_i | \mathcal{H}_{\tau_i-})}\, p_{ij},
\end{equation}
where 
\begin{equation*}
	p(\tau_i | I(\tau_i) = j,\mathcal{H}_{\tau_i-}) = \lambda_\tau^j(\tau_i) \exp \brackets{-\int_{\tau_{i-1}}^{\tau_i} \lambda_\tau^j(t) \rmd t},
\end{equation*}
and
\begin{equation*}
	p(\tau_i |\mathcal{H}_{\tau_i-}) = \sum_{j=1}^{i-1} \lambda_\tau^j(\tau_i) \exp \brackets{-\int_{\tau_{i-1}}^{\tau_i} \lambda_\tau^j(t) \rmd t} \, p_{ij},
\end{equation*}
with
\begin{equation*}
	\lambda_\tau^j(t) 
	= \mu(t - \tau_j) + \sum_{k=1}^{N(t-)} \kappa(m_k)
	g(t-\tau_k)\iint_{\mathcal{S}} f(x - x_k,y - y_k) \rmd y \rmd x.
\end{equation*}

The latitudinal residual is defined similarly using the
conditional distribution of $y_i$ given $\mathcal{H}_{\tau_i-}$,
$\tau_i$ and $x_i$. For $i=1$, it is given by
$$
W_1 = \frac{\int_{\set{y | (x_1,y) \in \mathcal{S},\,  y \le y_i}}
	\nu(x_1,y) \rmd y}{ \int_{\set{y | (x_1,y) \in \mathcal{S}} }\nu(x_1,y)
	\rmd y},
$$
and for $i = 2,\dotsc,n$, by
\begin{equation}
	W_i = \sum_{j=1}^{i-1} \biggl\{ \frac{\mu(\tau_i - \tau_j)  \int_{\set{y | (x_i,y) \in \mathcal{S} ,\, y \le y_i}} \nu(x_i,y) \rmd y
		+  \int_{\set{y | (x_i,y) \in \mathcal{S} ,\, y \le y_i}} \phi(\tau_i,x_i,y) \rmd y}
	{  \mu(\tau_i - \tau_j)\int_{\set{y | (x_i,y) \in
				\mathcal{S} }} \nu(x_i,y) \rmd y +   
		\int_{\set{y | (x_i,y) \in \mathcal{S} }} \phi(\tau_i,x_i,y)\rmd y} \, p^x_{ij}
	\biggr\},
\end{equation}
where
$p_{ij}^{x} = \P\brackets{I(\tau_i) = j | \tau_{i},
	x_{i},\mathcal{H}_{\tau_i-}}$ are the most recent main-shock
probabilities, but now include both $\tau_i$ and
$x_i$ in the condition. The previously computed $p_{ij}^\tau$ in~\eqref{eq:pij.tau-rec}
are updated to compute $p_{ij}^x$ as follows
\begin{equation*}
	p_{ij}^x = \frac{p(x_i | \tau_{i}, I(\tau_i) = j,\mathcal{H}_{\tau_i-})}
	{p(x_i | \tau_{i}, \mathcal{H}_{\tau_i -})}\, p_{ij}^\tau,
\end{equation*}
where the densities are given by
\begin{equation*}
	p(x_i | \tau_{i},I(\tau_i) = j, \mathcal{H}_{\tau_i -}) \\
	= 
	\mu(\tau_i - \tau_j) \int_{\set{y | (x_i,y) \in \mathcal{S} }} \nu(x_i,y) \rmd y + 
	\int_{\set{y | (x_i,y) \in \mathcal{S} }} \phi(\tau_i,x_i,y) \rmd y,
\end{equation*}
and
\begin{equation*}
	p(x_i | \tau_{i}, \mathcal{H}_{\tau_i -}) \\
	=       \sum_{j=1}^{i-1} p(x_i | \tau_{i}, I(\tau_i) =
	j,\mathcal{H}_{\tau_i -})\, p_{ij}^\tau.
\end{equation*}

\section{Simulations}
\label{sec:Simulations}
This section reports the results of a simulation study for the
{\retas} model and confirms that the finite sample performances of the
MLE are as expected. 
We first outline a procedure to simulate the \retas\ model%
.

\subsection{Simulation algorithm for the \retas\ model}
\label{sec:sim-alg}
Simulation of the {\retas} model can be efficiently performed by
utilizing the branching structure of the process. The algorithm
operates as follows.
\begin{enumerate}
	\item Simulate the occurrence times of main-shocks up to the censoring
	time $T$, as the cumulative sum of \iid\ positive random variables
	with hazard rate function $\mu(\cdot)$, and denote these by
	${\tau_1^0,\dots,\tau_{n_0}^0}$.
	
	\item For each main-shock $i=1,\dots,n_0$ simulate the location
	according to the bivariate probability density function $\nu(\cdot)$
	and denote the locations as ${s_1^0,\dotsc,s_{n_0}^0}$.
	
	\item Next, simulate the main-shock magnitudes $m_i^0$ according to
	the density $J(\cdot)$ and store all the times, locations, and
	magnitudes of main-shocks as the generation 0 catalog
	$G^{(0)}=\set{(\tau^0_i,s^0_i,m^0_i),\, i=1,\dotsc,n_0}$. Now set
	$l = 0$.
	
	\item \label{item:iter}For each earthquake in the generation $l$ 
	catalog $G^{(l)}$, namely, $(\tau_k^{(l)},s_k^{(l)},m_k^{(l)})$ for $k=1,\dotsc, |G^{(l)}|$, 
	simulate 
	the potential number of aftershocks as a Poisson random variable
	with mean $\kappa(m_k^{(l)})$.  Then for each potential aftershock,
	simulate its occurrence time relative to $\tau_k^{(l)}$ according to the
	temporal response function $g(\cdot)$, and discard the aftershock if
	the simulated occurrence time is beyond $T$; if retained, simulate
	its location relative to $s_k^{(l)}$ according to the spatial response
	function $f(\cdot, \cdot)$, and discard the aftershock if the
	location is outside $\mathcal{S}$; and if retained, simulate its
	magnitude according to the density function $J(\cdot)$. Record the
	times, locations and magnitudes of all retained
	aftershocks 
	as the generation $l+1$ catalog $G^{(l + 1)}$.
	
	\item 
	If the catalog $G^{(l+1)}$ is non-empty, set $l \leftarrow l +1$
	and return to Step~\eqref{item:iter}; otherwise, collect the simulated
	earthquakes of all generations and return them as the overall
	catalog $G=\bigcup_l G^{(l)}$.
	
\end{enumerate}

\subsection{Simulation Results}
\label{sec:simulations}
In this section, numerical evidence confirms that the
simulation, 
likelihood evaluation, 
and 
goodness-of-fit test 
algorithms are performing correctly and establishes
that the MLE 
has satisfactory finite sample
performance. The {\retas} model examined in this simulation study is
the same as described in Section~\ref{sec:Model}, but here we
only investigate the Weibull hazard function $\mu_W(t)$. The
magnitudes $m$ are simulated using a shifted exponential distribution
with rate parameter $\gamma = 5$ and threshold magnitude $m_0 = 6$.

The simulations consists of $1000$ realizations of the \retas\ model up to the
censoring time $T=200$ over the whole plane, i.e.
$\mathcal{S} = \mathbb{R}^2$. The shape and scale parameters of the
Weibull hazard function are chosen to be $(\alpha,\beta)=(0.5,0.5)$ or
$(2,1)$ to generate both heavily (temporally) clustered main-shocks
and under-dispersed main-shocks. The scale parameter $\beta$ was
chosen to have a mean waiting time between consecutive
main-shocks close to unity. The main-shocks are spatially distributed
according to an independent-marginal bivariate normal distribution
with mean at the origin and standard deviations $0.25$ and $0.5$ in
the $x$ and $y$ directions respectively. The parameters in the
temporal response function $g$ are fixed
at 
$p=2$ and $c=0.01$. The spatial response function is 
a bivariate normal density function with independent marginals and
variances $\sigma_1 = 0.01$ and $\sigma_2 = 0.02$ in the $x$
and $y$ directions respectively. The parameters for the boost
function $\kappa$ are $A=0.5$ and $\delta=1$, which implies that an
earthquake 
will induce, on average,
$ \int_{m_0}^\infty Ae^{\delta(m-m_0)}\frac1\gamma
e^{-(m-m_0)/\gamma}\rmd m = 0.625$ aftershocks. With such choice of
the model parameters, the number of earthquakes in the first
simulation model ranges from 718 to 1503 with
mean 
$1063$, and the number of earthquakes in the second simulation model ranges from
1008 to 1405 with mean 1189. The more volatile number of
earthquakes in the first simulation model is to be expected, as the shape
parameter $\alpha=0.5$ for the Weibull distribution therein implies
more volatile waiting times between main-shock arrivals.

The results of the simulation study are reported in
Table~\ref{tab:simstudy}, which contains the true value of each
parameter (True), the mean of the $1000$ parameter estimates by
directly minimizing the negative log-likelihood function (Est), the
empirical standard error of the $1000$ parameter estimates (SE), the
estimated standard error by computing the mean of the standard errors
found by inverting the approximate Hessian matrix
$(\widehat{\text{SE}})$, and the empirical coverage probability (CP)
of the $95\%$ confidence interval obtained by assuming (asymptotic)
normality of the estimator.

\begin{table}
	\caption{\label{tab:simstudy} Estimation results for the simulated data using MLE for the \retas\ model over the region $\mathcal{S} = \mathbb{R}^2$ with Weibull main-shock renewal process, an aftershock density following Omori's law and bivariate normal aftershock spatial distribution.
	}	
	\centering
		\begin{tabular}[htb!]{l | llllllll}
			& $\alpha$ & $\beta$ & $p$ & $c$ & $\sigma_1$ & $\sigma_2$ &  $A$ & $\delta$ \\
			\hline \hline
			True & 0.5 & 0.5 & 2 & 0.01 & 0.01 & 0.02 & 0.5 & 1 \\
			Est &  0.5014 & 0.5197 & 2.0148 & 0.0104 & 0.0104 & 0.0218 & 0.5045 & 0.9756 \\
			SE & 0.0313 & 0.0899 & 0.1668 & 0.0026 & 0.0013 & 0.0042 & 0.0455 & 0.2737 \\ 
			$\widehat{\text{SE}}$ & 0.0307 & 0.0865 & 0.1783 & 0.0026 & 0.0011 & 0.0023 & 0.0474 & 0.2894 \\ 
			CP & 0.9539 & 0.9609 & 0.9529 & 0.9449 & 0.8898 & 0.8317 & 0.9539 & 0.9579 \\ 
			\hline \hline
			True & 2 & 1 & 2 & 0.01 & 0.01 & 0.02 & 0.5 & 1 \\
			Est &   1.9913 & 0.9986 & 2.0205 & 0.0104 & 0.0102 & 0.0210 & 0.4978 & 1.0134 \\ 
			SE & 0.1306 & 0.0408 & 0.1558 & 0.0023 & 0.0010 & 0.0029 & 0.0397 & 0.2438 \\ 
			$\widehat{\text{SE}}$ &  0.1218 & 0.0407 & 0.1554 & 0.0023 & 0.0009 & 0.0018 & 0.0415 & 0.2582 \\
			CP & 0.9280 & 0.9460 & 0.9460 & 0.9390 & 0.9180 & 0.8660 & 0.9530 & 0.9660
	\end{tabular}
\end{table}

The MLE estimator demonstrates unbiasedness as the
estimated parameters are, on average, close to their respective true
values. The empirical standard errors and the average of the standard
errors are very similar
. Furthermore, the 95\% coverage probabilities are all reasonably 
close to their
nominal level
. Therefore, we conclude that the MLE provides a satisfactory
finite sample performance, and the simulation algorithm presented in Section~\ref{sec:sim-alg} is simulating from the correct model 
specification. 

The appropriateness of the sequential goodness-of-fit test procedure 
introduced in Section~\ref{sec:Model-Evaluation} will now be assessed. 
To this end, the three residual series $\set{U}$, $\set{V}$, and
$\set{W}$ for each simulated catalog are computed under two
scenarios using the estimated parameters and the true parameters. 
For both situations, the residuals series are examined for
uniformity using the K-S test and independence using the L-B
test. The results are presented in Table~\ref{tab:gofstudy}, which
reports the percentage of $p$-values for the several tests that are
less than the nominal significance level as indicated on the left-hand
panel for both the 5\% and 1\% level, for the temporal,
longitudinal and latitudinal components.

\begin{table}
	\caption{\label{tab:gofstudy} Percentage of K-S and L-B tests which lead to rejection of the null hypothesis for the two simulation models based on 1000 simulated datasets by computing residuals that are evaluated at both the estimated parameter values of the true simulated parameter values.}
	\centering
		\begin{tabular}[htb!]{ll|ll|ll|ll}
			\hline
			& & \multicolumn{2}{c|}{U} &\multicolumn{2}{c|}{V} &\multicolumn{2}{c}{W} \\
			& & K-S & L-B & K-S & L-B & K-S & L-B \\ 
			\hline
			\multicolumn{8}{c}{Estimated parameter values} \\
			\hline
			\multirow{2}{*}{5\%} & Model 1 & 0.00\% & 6.01\% & 4.21\% & 6.51\% & 4.61\% & 5.91\% \\ 
			& Model 2 & 0.00\% & 6.40\% & 4.20\% & 8.00\% & 6.40\% & 5.70\% \\ 
			\hline
			\multirow{2}{*}{1\%} & Model 1 & 0.00\% & 2.10\% & 0.80\% & 2.20\% & 0.60\% & 1.20\% \\
			& Model 2 & 0.00\% & 1.70\% & 0.70\% & 2.20\% & 1.60\% & 1.40\% \\ 
			\hline
			\multicolumn{8}{c}{True parameter values} \\
			\hline
			\multirow{2}{*}{5\%} & Model 1 & 4.91\% & 6.01\% & 4.21\% & 7.11\% & 4.61\% & 5.81\% \\ 
			& Model 2 & 3.40\% & 6.40\% & 4.80\% & 8.00\% & 5.20\% & 5.50\% \\
			\hline
			\multirow{2}{*}{1\%} & Model 1 & 1.10\% & 2.10\% & 0.90\% & 2.40\% & 0.30\% & 1.60\% \\ 
			& Model 2 & 0.80\% & 1.70\% & 0.80\% & 2.20\% & 1.10\% & 1.20\% \\ 
			\hline
	\end{tabular}
\end{table}

The percentage of residual series that reject the uniformity or
independence hypothesis is relatively close to the nominal levels 
indicating that the GOF test procedure is performing as
expected at the given significance level. 
For the temporal component, the residuals series computed using 
the estimated parameters all pass the uniformity test even at the 
5\% level, which is to be expected since using the fitted
parameters with a correctly specified model under the null
hypothesis generally leads to inflated $p$-values in GOF tests. 
However, when the true parameters are used in place of the estimated
parameters, the percentage of failed tests is much closer to the
nominal levels. In conclusion, the GOF tests based on these residuals
appear to work well as expected. 



\section{Data Analysis}
\label{sec:applications}
Many regions around the world have significantly distinct and unique
seismic activity. The versatility of the \retas\ model will be
illustrated on various earthquake catalogs around the world,
including New Zealand, Chile, Japan, and China. These catlogs will confirm
that by introducing the renewal main-shock arrival process to the 
classical \etas\ model improves the GOF. Since
these two models are nested, the AIC is computed to establish
that the \retas\ model is the superior model of choice for these particular
earthquake catalogs (when the self-exciting structure for aftershocks 
and the spatial variation for main-shocks have the same specification). 

\subsection{New Zealand earthquake catalog}
\label{sec:fits-NZ}

In this subsection we investigate a large region that includes 
most of the seismically active regions of New Zealand. 
The earthquake catalog is complied from the GeoNet 
Quake Search Database. The region under investigation 
is defined by the coordinates $164^\circ-182^\circ$~N and 
$48^\circ - 35 ^\circ$~S which consists of the majority 
of New Zealand as seen in Figure~\ref{fig:nzqks}. This 
region has previously been studied in the work of 
\cite{Harte2013, Harte2014}. The catalog contains 463 
earthquakes during the period 1997-01-01 until 2015-06-30 
with threshold magnitude $m_0=5$. 
Figure~\ref{fig:nzqks} displays all earthquakes that have 
occurred since the 1800-01-01 until 2015-06-30, with the 
size of the circle indicating the relative size of the magnitude
of the earthquake. Using historical 
and in-sample data we estimate the main-shock (background) 
spatial density by applying a two-dimensional Gaussian kernel density 
smoother to all the earthquakes that occurred in the region.

\begin{figure}[htb!]
	\caption{\label{fig:nzqks} Earthquakes around New Zealand from 1800-01-01 until
		2015-06-30 with the size of the circle indicating the relative size
		of the magnitude of the earthquake.}
	\centering
	\includegraphics[width = 0.70\linewidth]{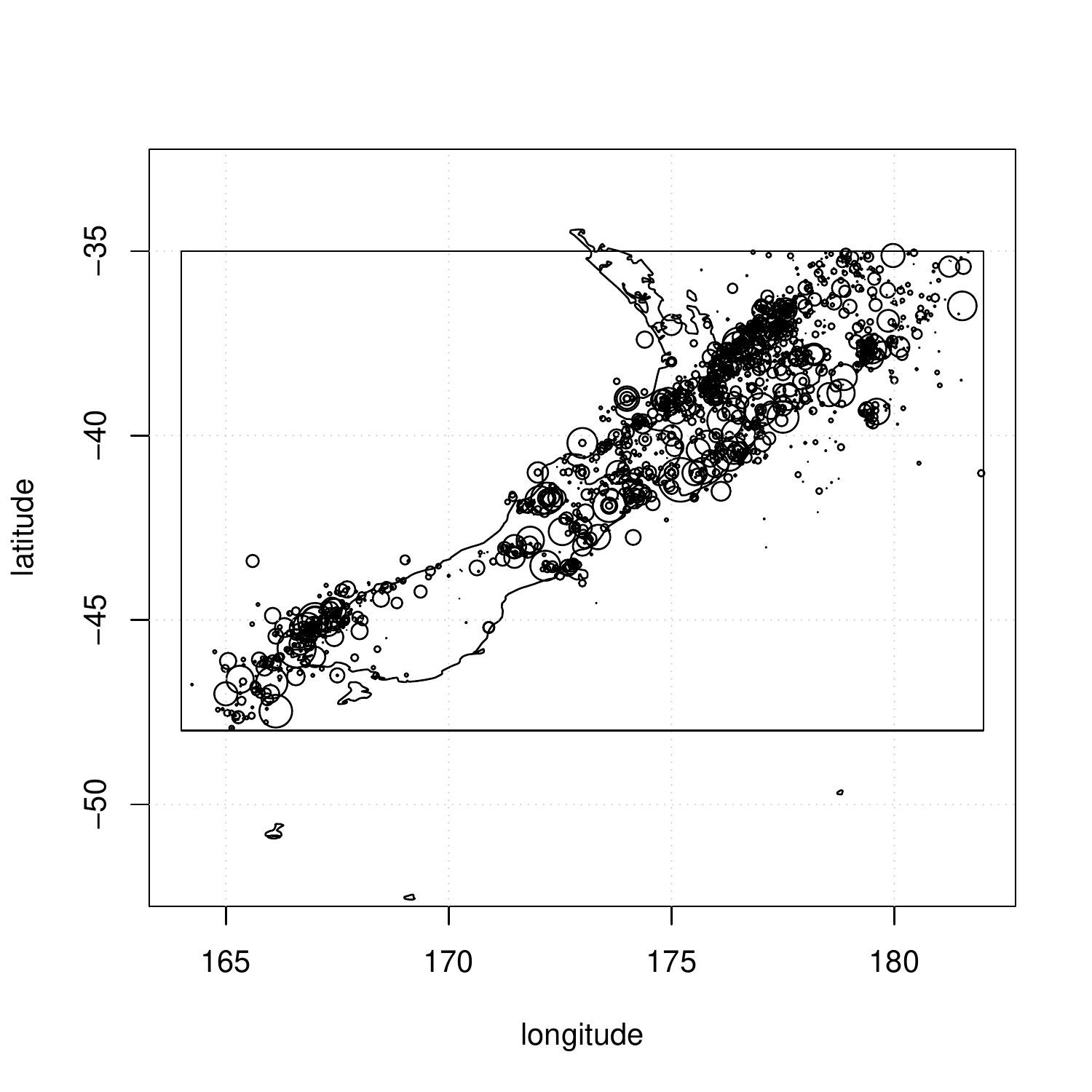}
\end{figure}

The renewal main-shock arrival process is fit with three different 
forms; exponential (classical ETAS), Weibull and gamma 
distributions. For each of the three main-shock renewal distributions, 
we report the estimation results in Table~\ref{tab:NZ-fits}, which 
contains the estimated model parameters, 
the standard errors 
and the AIC estimate for the fitted model. General purpose 
optimization routines are used to obtain estimates of 
model parameter with the initial parameters set using a fit 
from a nested model. More specifically, we fit 
sequentially the following models; Poisson process, temporal Hawkes process, 
\etas\ model, spatiotemporal ETAS model
and \retas\ model. \\


\begin{table}
	\caption{\label{tab:NZ-fits} Estimation results which includes parameter estimates and standard errors in parentheses for the New Zealand earthquake catalog with exponential (ETAS), Weibull and gamma RETAS models.}
	\centering
	\resizebox{\linewidth}{!}{
			\begin{tabular}{lccccccccc}
				\hline
				& $\hat\alpha$ & $\hat\beta$ & $\hat p$ & $\hat c$ & $\hat \sigma_1$ & $\hat \sigma_2$ & $\hat A$ & $\hat\delta$ & AIC  \\ 
				\hline \hline
				\multirow{2}{*}{\etas} & 1 &  25.16 & 1.095 & 0.0021 & 0.065 & 0.017 & 0.217 & 1.719 & \multirow{2}{*}{4599.21 } \\ 
				& (NA)& (1.66) & (0.025) & (0.00036) & (0.0088) & (0.0025) & (0.044) & (0.11) & \\ \hline
				\multirow{2}{*}{{\retas}$_{Wei}$} & 0.864 & 23.00 & 1.128 & 0.0026 & 0.063 & 0.017 & 0.179 & 1.708 & \multirow{2}{*}{4590.99} \\ 
				& (0.044) & (1.79) & (0.024) & (0.00064) & (0.0085) & (0.0024) & (0.027) & (0.11) & \\ \hline
				\multirow{2}{*}{{\retas}$_{Gam}$} & 0.752 & 33.425 & 1.092 & 0.0020 & 0.062 & 0.017 & 0.222 & 1.712 & \multirow{2}{*}{4588.95} \\ 
				& (0.060) & (3.68) & (0.024) & (0.00028) & (0.0084) & (0.0024) & (0.046) & (0.11) & \\ \hline
	\end{tabular}}
\end{table}

The flexibility provided by the \retas\ model is mainly derived from the parameters $\alpha$ and $\beta$ which describe the renewal main-shock arrival process and we examine these parameter first. 
The estimated shape parameters $\hat\alpha = 0.864$ and $\hat \alpha = 0.752$ for the Weibull and gamma \retas\ models are both significantly less than one with 95\% confidence intervals $(0.778, 0.950)$ and $(0.635,0.869)$ respectively. This implies that the main-shock arrival process depart from homoegnous temporal variation $(\alpha=1)$ and indicates that more substantial clustering is apparent than that of the classical \etas\ model. 
The mean waiting time between consecutive main-shocks for the three models are all very close with $25.11$, $24.78$ and $25.14$ days for the classical \etas, Weibull, and gamma {\retas} models respectively. However, the volatility of the main-shock waiting times are different. For instance, the standard deviation for these waiting times are 25.10 for the classical \etas\ model but are higher for the Weibull and gamma \retas\ model at 28.78 and 28.99, respectively. 
%

The self-excitation component of each of the three fitted models have
similar parameter estimates and infer comparable aftershock
properties. The estimated parameters $\hat p$ and $\hat c$ are 
comparable among the models. Take for example the gamma \retas\ model 
in which $\hat p =1.092$ and $\hat c = 0.0020$. This implies that the 
aftershock occurs with probability 24.59\% within the next hour, 
43.47\% within a day, 52.74\% within a week and 67.15\% within the 
next year after the triggering earthquake. The aftershocks generally 
occur shortly after the triggering earthquake but its effects can 
still last a long time thereafter. For instance, the median waiting 
time between an earthquake and its induced aftershock (if there is any) 
is 3 days and 19.13 hours for the gamma \retas\ model, and this is 
significantly shorter than the median waiting time between main-shocks 
which is around 10.25 days. The estimated parameters $\hat \sigma_1$ 
and $\hat \sigma_2$ are also very close for all three models with 
$\hat \sigma_1$ being more than three times larger than 
$\hat \sigma_2$. The estimated boost function parameters $\hat A$ 
and $\hat\delta$ indicate the expected number of induced aftershocks 
from a single earthquake of magnitude $m$. For example, an earthquake 
of magnitude 5, 5.5 and 6 are expected to induce $0.222$, $0.522$ and 
$1.229$ aftershocks under the estimated gamma \retas\ model. Under the 
Weibull \retas\ model, these values are slightly smaller at $0.179$, 
$0.421$, and $0.988$ respectively.  Therefore, we can conclude that
the choice of the renewal main-shock process is having a minimal
impact on the temporal and spatial distribution of aftershocks around
their triggering earthquake. However, even though the self-excited
part of the models are similar, the identification of main-shocks and
aftershocks will still be influenced by the main-shock arrival
rate which changes depending on the time lapsed since the most recent 
main-shock. 
%
%

For each model the temporal, longitudinal and latitudinal residual
series are computed and examined for uniformity and independence using
the K-S and L-B tests respectively. The p-values are reported in
Table~\ref{tab:NZ-GOF}. The classical
spatiotemporal \etas\ model does not provide a satisfactory fit since
the temporal residuals result in a rejection of the uniformity
assumption as the 5\% level with a p-value of only 0.023. Therefore,
it is unable to describe the temporal pattern of the earthquake arrival
times and additional flexibility in specifying the arrival time
distribution between earthquakes is required.
\begin{table}
	\caption{ \label{tab:NZ-GOF} GOF test results for the New Zealand earthquake catalog
		exponential (\etas), Weibull and gamma \retas\ models. This table 
		contains the p-values for the K-S tests and
		L-B tests for the temporal, longitudinal, and latitudinal
		residuals, as well as the combined residual series. 	 }
	
	\centering
		\begin{tabular}{l cc cc cc cc}
			\hline
			& \multicolumn{2}{c}{U} &\multicolumn{2}{c}{V} &\multicolumn{2}{c}{W} & \multicolumn{2}{c}{Combined} \\\cline{2-9}
			& K-S & L-B & K-S & L-B & K-S & L-B & K-S & L-B \\ 
			\hline
			\etas\ & 0.023 & 0.145 & 0.224 & 0.754 & 0.937 & 0.119 & 0.072 & 0.251\\ 
			{\retas}$_{Wei}$ & 0.159 & 0.058 & 0.120 & 0.693 & 0.902 & 0.104 & 0.135 & 0.117\\ 
			{\retas}$_{Gam}$ & 0.360 & 0.035 & 0.188 & 0.715 & 0.907 & 0.113 & 0.255 & 0.107\\
			\hline
	\end{tabular}
\end{table}

The second observation is that the p-values from the K-S test for the temporal residual series of the \retas\ model are
notably higher and signify a meaningful improvement in fit for both
the Weibull and gamma \retas\ model. This shows that the \retas\ model
can sufficiently model the temporal variation of earthquakes for this
New Zealand catalog. However, the p-value from the L-B test falls just short of the 
5\% signficance level for the gamma \retas\ model. For the spatial components, both the
longitudinal and latitudinal residuals indicate that all three models
are providing similar fits with p-values greater than 5\% for both
uniformity and independence. However, from the AIC values reported in
Table~\ref{tab:NZ-fits}, the two \retas\ models have markedly smaller
AIC values, with that of the gamma \retas\ model slightly smaller than
the Weibull \retas\ model. Therefore, by combining the GOF test
results and the AIC values, we conclude that the gamma \retas\ model
is the favored model to describe the seismicity for this New Zealand
earthquake catalog. 




\subsection{Earthquake catalogs around the world}
\label{sec:fits-world}

In this subsection, we study earthquake catalogs from various regions
around the world to confirm the utility of the \retas\ model in
modeling diverse range of seismically active regions. The section
includes an analysis of the following earthquake catalogs:
\begin{itemize}
	\item \textbf{Japan:} This catalog contains 577 earthquakes in the
	rectangular region defined by $141^\circ-145^\circ$ E and
	$36^\circ - 42^\circ$ N during the period 1980-01-01 until
	2015-06-30 with threshold magnitude $m_0 = 5.5$. This earthquake
	catalog was obtained from the Japan Meteorological Agency (JMA)
	database.
	
	\item \textbf{Chile:} This catalog contains 370 earthquakes in the
	rectangular region defined by $76^\circ-64^\circ$~W and
	$40^\circ-18^\circ$~S from 1997-01-01 until 2015-06-30 with
	threshold magnitude $m_0 = 5.5$. The data was sourced from the
	United States Geological Survey (USGS) database.
	
	\item \textbf{China:} This catalog contains 627 earthquakes in the
	region $97^\circ-107^\circ$~E and $26^\circ-34^\circ$~N, from
	1997-01-01 to 2015-06-30 with a threshold magnitude of $m_0 =
	4.5$. This data was also obtained from the USGS database.
\end{itemize}

For each of the three earthquake catalogs, we fit all three \retas\
models and summarize the results in Table~\ref{tab:world-fit}. For all
three earthquake catalogs the AIC value is significantly smaller for
the \retas\ model than the classical \etas\ model with the gamma 
\retas\ model performing the best based on this
criterion. The estimated shape parameter of the gamma
renewals $\alpha$ is statistically less than one in all data sets
, suggesting that the earthquakes in these catalogs are more heavily
clustered than implied by the \etas\ model. 
By examining the estimation results for the Japan earthquake catalog
we observe that there is a noticeable difference in the estimated
value $\hat p$ for each of the three models. For instance, in the
classical \etas\ model, $\hat p =1.004$ which is the smallest
estimated value, while for the gamma \retas\ model $\hat p =1.082$ is
significantly higher. This implies that for the \retas\ models, the
induced shocks happen much quicker than in the 
classical \etas\ model. 

\begin{table}
	\caption{	\label{tab:world-fit} Estimation results including parameter estimates and
		standard errors in parentheses for the Japan, Chile and China
		earthquake catalog with exponential (\etas), Weibull and gamma \retas\ models.}
	\centering
	\resizebox{\linewidth}{!}{
		\begin{tabular}{lccccccccc}
			\hline
			& $\hat \alpha$ & $\hat \beta$ & $\hat p$ & $\hat c$ & $\hat \sigma_1^2$ & $\hat \sigma_2^2$ & $\hat A$ & $\hat \delta$ & AIC  \\ 
			\hline
			\textbf{Japan} &&&&&&&&&\\
			\hline                      
			\multirow{2}{*}{\etas} & 1 & 87.96 & 1.004 & 0.0024 & 0.084 & 0.024 & 9.864 & 0.893 & \multirow{2}{*}{3947.68} \\ 
			& (NA)& (8.54) & (0.00044) & (0.00041) & (0.019) & (0.0033) & (1.451) & (0.100) & \\  \hline
			\multirow{2}{*}{{\retas}$_{Wei}$}  & 0.887 & 70.34 & 1.101 & 0.0059 & 0.084 & 0.024 & 0.663 & 0.814 & \multirow{2}{*}{3925.12} \\ 
			& (0.084) & (8.55) & (0.021) & (0.0016) & (0.018) & (0.0032) & (0.097) & (0.112)   & \\ \hline
			\multirow{2}{*}{{\retas}$_{Gam}$}  & 0.562 & 134.56 & 1.082 & 0.0052 & 0.075 & 0.023 & 0.727 & 0.829 & \multirow{2}{*}{3911.95} \\ 
			& (0.068) & (22.75) & (0.021) & (0.0014) & (0.013) & (0.0029) & (0.128) & (0.108)  & \\
			\hline
			\textbf{Chile} &&&&&&&&&\\ 
			\hline
			\multirow{2}{*}{\etas} & 1 & 39.63 & 1.043 & 0.0079 & 0.039 & 0.122 & 0.840 & 0.854 & \multirow{2}{*}{4320.26} \\ 
			& (NA)& (3.61) & (0.031) & (0.0032) & (0.0061) & (0.027) & (0.486) & (0.135)  & \\ \hline
			\multirow{2}{*}{{\retas}$_{Wei}$} & 0.708 & 32.37 & 1.018 & 0.0068 & 0.037 & 0.056 & 1.779 & 0.833 & \multirow{2}{*}{4303.60} \\ 
			& (0.052) & (3.91) & (0.015) & (0.0025) & (0.0051) & (0.012) & (1.406) & (0.127) & \\ \hline
			\multirow{2}{*}{{\retas}$_{Gam}$} & 0.565 & 69.68 & 1.016 & 0.0068 & 0.036 & 0.053 & 1.951 & 0.853 &  \multirow{2}{*}{4296.86} \\ 
			& (0.060) & (10.98) & (0.012) & (0.0024) & (0.0051) & (0.012) & (1.436) & (0.125) & \\ \hline
			\textbf{China} &&&&&&&&&\\
			\hline    
			\multirow{2}{*}{\etas} & 1& 43.02 & 1.103 & 0.016 & 0.044 & 0.0094 & 0.688 & 1.152 &  \multirow{2}{*}{3238.89} \\ 
			& (NA)&(4.02) & (0.024) & (0.0045) & (0.0036) & (0.0012) & (0.098) & (0.085) & \\  \hline
			\multirow{2}{*}{{\retas}$_{Wei}$}  & 0.818 & 36.90 & 1.118 & 0.018 & 0.044 & 0.0090 & 0.613 & 1.184 & \multirow{2}{*}{3232.49} \\ 
			& (0.059) & (4.24) & (0.022) & (0.0048) & (0.0036) & (0.0013) & (0.069) & (0.082) & \\ \hline
			\multirow{2}{*}{{\retas}$_{Gam}$}  & 0.694 & 60.20 & 1.100 & 0.016 & 0.044 & 0.0087 & 0.668 & 1.207 & \multirow{2}{*}{3229.45} \\ 
			& (0.081) & (9.21) & (0.024) & (0.0046) & (0.0036) & (0.0012) & (0.098) & (0.078) & \\ \hline
	\end{tabular}}
\end{table}

The estimated spatial response function which describes the
distribution of aftershocks around their triggering earthquake is very
similar in each the three models and with each catalog, and this is consistent with the New Zealand catalog presented in the previous section except for the longitudinal dispersion parameter for the Chile catalog which has an estimated value that is more than double that of the \retas\ models.
However, unlike the fitted models for the New Zealand
catalog, there are significant variations in the estimated values of the
parameter $A$ and $\delta$ between the three models. For instance, for
the Japan dataset, the estimated parameters for the classical \etas\
model are $\hat A = 9.864$ and $\hat \delta = 0.893$ while for the
gamma \retas\ model they are both considerably smaller with
$\hat A = 0.727$ and $\hat \delta = 0.829$. This implies that the 
classical \etas\ model has a larger number of
expected aftershocks induced from smaller magnitude earthquakes, and
since $\delta$ is also bigger it implies that more aftershocks on
average are induced from earthquakes of any magnitude. 

\section{Discussion}
\label{sec:conclusion}

This article proposed an improved version of the classical \etas\ model, in which the homogenous main-shock arrival process is replaced by inhomogeneous main-shock arrival process in the form of a renewal process. The {\retas} model was applied to several earthquake catalogs from around the world including New Zealand, Japan, Chile, and China. These earthquake catalogs revealed that the \retas\ model had a significant influence on the data fitting of main-shocks  compared to the classical \etas\ model. In fact, the classical \etas\ model was shown to not be suitable for the New Zealand earthquake catalog due to its simple assumption about main-shock arrival times. The \retas\ model overcame this shortfall and provided a superior quality of fit, as indicated by the newly proposed sequential GOF test procedure and the AIC value. After incorporating both Weibull and gamma renewal distributions, the estimated shape parameters $\hat \alpha$ of the main-shock renewal distributions (which is restricted to one in the classical \etas\ model) become significantly smaller than one, around 0.5-0.8, indicating a much stronger clustering of main-shock earthquakes.

We also fit the \retas\ model to two other seismically active regions
around the world, in California and Italy for the period 1997-01-01 to
2015-06-30 with a threshold magnitude $m_0 =4.5$ and $m_0 =4$,
respectively. For the California earthquake catagloue, the estimated
shape parameter for the gamma main-shock renewal distribution was
$\hat \alpha = 0.795$, which is again statistically significantly less
than one. The AIC value was $2370.71$, which is slightly smaller than
the classical \etas\ model at $2373.42$. However, the parameter
controlling the aftershock decay was estimated to be very close to one
using the density function $g$. Therefore, to allow $p$ to have a
values smaller than one, we use the temporal response function
$g(t) = (1+\frac t c)^{-p}$ and found $\hat p = 0.964$. However, this
means that $g$ can no longer be normalized to a properly density
function since its integral on $(0,\infty)$ diverges. The approximate 
Hessian matrix was also not invertible and therefore estimates of the 
standard errors were easily obtainable.

For the Italian earthquake catalog, the estimated shape parameter of
the gamma \retas\ model was $\hat \alpha = 1.041$, which is slightly
bigger than one. This indicates that the renewal main-shock arrival
process is not too dissimilar to a homogeneous Poisson process, and
the classical \etas\ model would be appropriate model for this
catalog. This is another advantage of the \retas\ model since it can
help in determining any departure or non-departure from a homogenous
main-shock arrival process. This is further reinforced by the AIC
values, which determines that the classical \etas\ model is the
superior model of choice when accounting for both the adequacy of
model fit and model complexity (number of parameters).

A more meaningful analysis of an earthquake catalog using the \retas\
model could be performed by utilizing a probability based stochastic
declustering algorithm, in which each earthquake is assigned a
probability to be either a main-shock or an aftershock induced by a
previous earthquake. For the classical \etas\ model, this is
straightforward to perform since the past points of the process are
conditionally independent of future points conditioned on the history. 
However, this is not exactly true for the \retas\ model, and therefore 
the probabilities required for the
declustering algorithm must be conditioned on the complete earthquake
catalog rather than only the past points. This provides an avenue for
future research.

\bibliographystyle{apalike}
\bibliography{SpatTemp}

\end{document}